\renewcommand{\Im}{\mathrm{Im}}
\renewcommand{\Re}{\mathrm{Re}}
\newtheorem{lemma}{Lemma}
\newtheorem{theorem}{Theorem}
\newtheorem{definition}{Definition}
\begin{document}

\begin{center}
\Large \textbf{Absence of traveling wave solutions of conductivity type for the Novikov-Veselov equation at zero energy}
\end{center}

\begin{center}
A.V. Kazeykina \footnote{CMAP, Ecole Polytechnique, Palaiseau, 91128, France; email: kazeykina@cmap.polytechnique.fr}
\end{center}

\textbf{Abstract.} We prove that the Novikov-Veselov equation (an analog of KdV in dimension $ 2 + 1 $) at zero energy does not have sufficiently localized soliton solutions of conductivity type.

\section{Introduction}
In this note we are concerned with the Novikov-Veselov equation at zero energy
\begin{equation}
\label{NV}
\begin{aligned}
& \partial_t v = 4 \Re ( 4 \partial_z^3 v + \partial_z( v w ) ), \\
& \partial_{ \bar z } w = - 3 \partial_z v, \quad v = \bar v, \\
& v = v( x, t ), \quad w = w( x, t ), \quad x = ( x_1, x_2 ) \in \mathbb{R}^2, \quad t \in \mathbb{R},
\end{aligned}
\end{equation}
where
\begin{equation*}
\partial_t = \frac{ \partial }{ \partial t }, \quad \partial_z = \frac{ 1 }{ 2 } \left( \frac{ \partial }{ \partial x_1 } - i \frac{ \partial }{ \partial x_2 } \right), \quad \partial_{ \bar z } = \frac{ 1 }{ 2 } \left( \frac{ \partial }{ \partial x_1 } + i \frac{ \partial }{ \partial x_2 } \right).
\end{equation*}

\begin{definition}
A pair $ ( v, w ) $ is a sufficiently localized solution of equation (\ref{NV}) if
\begin{itemize}
\item $ v, w \in C( \mathbb{R}^2 \times \mathbb{R} ) $, $ v( \cdot, t ) \in C^3( \mathbb{R}^3 ) $,
\item $ | \partial_{ x }^{ j } v( x, t ) | \leqslant \dfrac{ q( t ) }{ ( 1 + | x | )^{ 2 + \varepsilon } }, \; | j | \leqslant 3 $, for some $ \varepsilon > 0 $, $ w( x, t ) \to 0, | x | \to \infty $,
\item $ ( v, w ) $ satisfies (\ref{NV}).
\end{itemize}
\end{definition}

\begin{definition}
A solution $ (v, w) $ of (\ref{NV}) is a soliton (a traveling wave) if $ v( x, t ) = V( x - c t ) $, $ c \in \mathbb{R}^2 $.
\end{definition}

Equation (\ref{NV}) is an analog of the classic KdV equation. When $ v = v( x_1, t ) $, $ w = w( x_1, t ) $, then equation (\ref{NV}) is reduced to KdV. Besides, equation (\ref{NV}) is integrable via the scattering transform for the $ 2 $--dimensional Schr\"odinger equation
\begin{equation}
\label{schrodinger}
\begin{aligned}
& L \psi = 0, \\
L = - \Delta + v( x, t ), & \quad \Delta = 4 \partial_z \partial_{ \bar z }, \quad x \in \mathbb{R}^2.
\end{aligned}
\end{equation}

Equation (\ref{NV}) is contained implicitly in \cite{M} as an equation possessing the following representation
\begin{equation}
\frac{ \partial( L - E ) }{ \partial t } = [ L - E, A ] + B( L - E ),
\end{equation}
where $ L $ is defined in (\ref{schrodinger}), $ A $ and $ B $ are suitable differential operators of the third and zero order respectively and $ [ \cdot, \cdot ] $ denotes the commutator. In the explicit form equation (\ref{NV}) was written in \cite{NV1}, \cite{NV2}, where it was also studied in the periodic setting. For the rapidly decaying potentials the studies of equation (\ref{NV}) and the scattering problem for (\ref{schrodinger}) were carried out in \cite{BLMP}, \cite{GN} \cite{T}, \cite{LMS}. In \cite{LMS} the relation with the Calder\'on conductivity problem was discussed in detail.

\begin{definition}
A potential $ v \in L^p( \mathbb{R}^2 ) $, $ 1 < p < 2 $, is of conductivity type if $ v = \gamma^{ -1/2 } \Delta \gamma^{ 1/2 } $ for some real-valued positive $ \gamma \in L^{ \infty }( \mathbb{R}^2 ) $, such that $ \gamma \geqslant \delta_0 > 0 $ and $ \nabla \gamma^{ 1/2 } \in L^p( \mathbb{R}^2 ) $.
\end{definition}
The potentials of conductivity type arise naturally when the Calder\'on conductivity problem is studied in the setting of the boundary value problem for the $ 2 $-dimensional Schr\"odinger equation at zero energy (see \cite{Nov1}, \cite{N}, \cite{LMS}); in addition, in \cite{N} it was shown that for this type of potentials the scattering data for (\ref{schrodinger}) are well-defined everywhere.

The main result of the present note consists in the following: there are no solitons of conductivity type for equation (\ref{NV}). The proof is based on the ideas proposed in \cite{Nov2}.

This work was fulfilled in the framework of research carried out under the supervision of R.G. Novikov.

\section{Scattering data for the $ 2 $-dimensional Schr\"odinger equation at zero energy with a potential of conductivity type}
Consider the Schr\"odinger equation (\ref{schrodinger}) on the plane with the potential $ v( z ) $, $ z = x_1 + i x_2 $, satisfying
\begin{equation}
\label{v_conditions}
\begin{aligned}
& v( z ) = \overline{ v( z ) }, \quad v( z ) \in L^{ \infty }( \mathbb{C} ), \\
& | v( z ) | < q ( 1 + | z | )^{ - 2 - \varepsilon } \text{ for some } q > 0, \; \varepsilon > 0.
\end{aligned}
\end{equation}
For $ k \in \mathbb{C} $ we consider solutions $ \psi( z, k ) $ of (\ref{schrodinger}) having the following asymptotics
\begin{equation}
\label{psi_asympt}
\psi( z, k ) = e^{ i k z } \mu( z, k ), \quad \mu( z, k ) = 1 + o( 1 ), \text{ as } | z | \to \infty,
\end{equation}
i.e. Faddeev's exponentially growing solutions for the two-dimensional Schr\"odinger equation (\ref{schrodinger}) at zero energy, see \cite{F}, \cite{GN}, \cite{Nov1}.

It was shown that if $ v $ satisfies (\ref{v_conditions}) and is of conductivity type, then $ \forall k \in \mathbb{C} \backslash 0 $ there exists a unique continuous solution of (\ref{schrodinger}) satisfying (\ref{psi_asympt}) (see \cite{N}). Thus the scattering data $ b $ for the potential $ v $ of conductivity type are well-defined and continuous:
\begin{equation}
b( k ) = \iint\limits_{ \mathbb{C} }  e^{ i( k y + \bar k \bar y ) } v( y ) \mu( y, k ) d \Re y d \Im y, \quad k \in \mathbb{C} \backslash 0.
\end{equation}
In addition (see \cite{N}), the function $ \mu( z, k ) $ from (\ref{psi_asympt}) satisfies the following $ \bar \partial $-equation
\begin{equation}
\label{d_bar}
\frac{ \partial \mu ( z, k ) }{ \partial \bar k } = \frac{ 1 }{ 4 \pi \bar k } e^{ - i ( k z + \bar k \bar z ) } b( k ) \overline{ \mu( z, k ) }, \quad z \in \mathbb{C}, \quad k \in \mathbb{C} \backslash 0
\end{equation}
and the following limit properties:
\begin{gather}
\label{mu_limit}
\mu( z, k ) \to 1, \text{ as } | k | \to \infty, \\
\label{mu_bounded}
\mu( z, k ) \text{ is bounded in the neighborhood of } k = 0.
\end{gather}

The following lemma describes the scattering data corresponding to a shifted potential.
\begin{lemma}
Let $ v( z ) $ be a potential satisfying (\ref{v_conditions}) with the scattering data $ b( k ) $. The scattering data $ b_y( k ) $ for the potential $ v_y( z ) = v( z - y ) $ are related to $ b( k ) $ by the following formula
\begin{equation}
\label{b_y}
b_y( k ) =  e^{ i ( k y + \bar k \bar y ) } b( k ), \quad k \in \mathbb{C} \backslash 0, \quad y \in \mathbb{C}.
\end{equation}
\end{lemma}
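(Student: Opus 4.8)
The plan is to reduce the statement to the translation invariance of the Laplacian together with the uniqueness of Faddeev's exponentially growing solution. First I would record the elementary observation that $v_y$ again satisfies (\ref{v_conditions}) with the same $q$ and $\varepsilon$, and is of conductivity type whenever $v$ is, since the defining relation $v = \gamma^{-1/2}\Delta\gamma^{1/2}$ is preserved under the substitution $\gamma(z)\mapsto\gamma(z-y)$; hence the scattering data $b_y$ are well defined through a uniquely determined function $\mu_y$ and the right-hand side of (\ref{b_y}) makes sense.

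The core of the argument is to exhibit $\mu_y$ explicitly. Starting from Faddeev's solution $\psi(z,k) = e^{ikz}\mu(z,k)$ of (\ref{schrodinger}) for the potential $v$, I would set
\[
\psi_y(z,k) := e^{iky}\,\psi(z-y,k).
\]
Since $\Delta$ commutes with translations, $(-\Delta + v(z-y))\psi_y(z,k) = e^{iky}\big[(-\Delta + v)\psi\big](z-y,k) = 0$, so $\psi_y$ solves the Schr\"odinger equation with potential $v_y$. Moreover
\[
\psi_y(z,k) = e^{iky}e^{ik(z-y)}\mu(z-y,k) = e^{ikz}\mu(z-y,k),
\]
so $\mu_y(z,k) := \mu(z-y,k) = 1 + o(1)$ as $|z|\to\infty$, i.e.\ $\psi_y$ has the asymptotics (\ref{psi_asympt}). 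By the uniqueness statement quoted from \cite{N}, $\psi_y$ is exactly Faddeev's solution for $v_y$, and $\mu_y(z,k) = \mu(z-y,k)$ is the function entering the definition of $b_y$. (As a consistency check one may also verify directly that this $\mu_y$ satisfies (\ref{d_bar})--(\ref{mu_bounded}) with $b$ replaced by the right-hand side of (\ref{b_y}).)

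It then remains to substitute into the definition of the scattering data and change variables $\zeta\mapsto\zeta+y$ in the integral, which is absolutely convergent by (\ref{v_conditions}) and the boundedness of $\mu$:
\[
b_y(k) = \iint\limits_{\mathbb{C}} e^{i(k\zeta + \bar k\bar\zeta)}\, v(\zeta - y)\,\mu(\zeta - y, k)\, d\Re\zeta\, d\Im\zeta = e^{i(ky + \bar k\bar y)}\iint\limits_{\mathbb{C}} e^{i(k\eta + \bar k\bar\eta)}\, v(\eta)\,\mu(\eta,k)\, d\Re\eta\, d\Im\eta,
\]
which equals $e^{i(ky + \bar k\bar y)} b(k)$, as claimed.

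There is no genuine analytic difficulty in this argument; the only point requiring a moment's care is the well-definedness step, namely checking that the class of potentials for which \cite{N} provides a unique continuous $\mu$ (conductivity type, and in particular (\ref{v_conditions})) is preserved by the translation $v\mapsto v_y$, so that the $\mu_y$ produced by the construction is genuinely the one appearing in the definition of $b_y$. Everything else is the one-line change of variables above.
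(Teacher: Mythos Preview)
Your proof is correct and follows essentially the same route as the paper: identify $\psi_y(z,k)=e^{iky}\psi(z-y,k)$ as Faddeev's solution for $v_y$, read off $\mu_y(z,k)=\mu(z-y,k)$, and change variables in the defining integral for $b_y$. Your write-up is in fact more careful than the paper's, since you explicitly note the preservation of the conductivity-type hypothesis under translation and invoke the uniqueness statement from \cite{N}, points the paper's proof leaves implicit.
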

\begin{proof}
We note that $ \psi( z - y, k ) $ satisfies (\ref{schrodinger}) with $ v_y( z ) $ and has the asymptotics $ \psi( z - y, k ) = e^{ i k ( z - y ) }( 1 + o( 1 ) ) $ as $ | z | \to \infty $. Thus $ \psi_y( z, k ) = e^{ i k y } \psi( z - y, k ) $ and $ \mu_y( z, k ) = \mu( z - y, k ) $. Finally, we have
\begin{multline*}
b_y( k ) = \iint\limits_{ \mathbb{C} } e^{ i( k \zeta + \bar k \bar \zeta ) } v_y( \zeta ) \mu_y( \zeta, k ) d \Re \zeta d \Im \zeta = \\
= \iint\limits_{ \mathbb{C} } e^{ i( k \zeta + \bar k \bar \zeta ) } v( \zeta - y ) \mu( \zeta - y, k ) d \Re \zeta d \Im \zeta = e^{ i ( k y + \bar k \bar y ) } b( k ).
\end{multline*}
\end{proof}

As for the time dynamics of the scattering data, in \cite{BLMP}, \cite{GN} it was shown that if the solution $ ( v, w ) $ of (\ref{NV}) exists and the scattering data for this solution are well-defined, then the time evolution of these scattering data is described as follows:
\begin{equation}
\label{time_evolution}
b( k, t ) = e^{ i ( k^3 + \bar k^3 ) t } b( k, 0 ), \quad k \in \mathbb{C} \backslash 0, \quad t \in \mathbb{R}.
\end{equation}

\section{Absence of solitons of conductivity type}
\begin{theorem}
Let $ (v, w) $ be a sufficiently localized traveling wave solution of (\ref{NV}) of conductivity type. Then $ v \equiv 0 $, $ w \equiv 0 $.
\end{theorem}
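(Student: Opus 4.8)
The plan is to compare the two available descriptions of the time evolution of the scattering data of $v$ --- the one forced by the traveling-wave structure via Lemma~1, and the Novikov--Veselov dynamics \eqref{time_evolution} --- conclude that they are compatible only if $b\equiv 0$, and then invert: $b\equiv 0$ forces $v\equiv 0$, and then $w\equiv 0$.

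First I would put $V := v(\cdot,0)$ and identify $c=(c_1,c_2)\in\mathbb{R}^2$ with $c_1+ic_2\in\mathbb{C}$, so that $v(\cdot,t)=V(\cdot-ct)$. Since $V$ satisfies \eqref{v_conditions} and is of conductivity type --- as is each translate $v(\cdot,t)$, being a shift of $\gamma^{1/2}$ --- the scattering data $b(k):=b(k,0)$ and $b(k,t)$ are well defined and continuous on $\mathbb{C}\setminus 0$. Lemma~1 with $y=ct$ then gives $b(k,t)=e^{it(kc+\bar k\bar c)}b(k)=e^{2it\,\Re(kc)}b(k)$, while \eqref{time_evolution} --- applicable because $(v,w)$ is a sufficiently localized solution of \eqref{NV} with well-defined scattering data --- gives $b(k,t)=e^{2it\,\Re(k^3)}b(k)$. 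Comparing, for every $k$ with $b(k)\neq 0$ and every $t\in\mathbb{R}$ we obtain $e^{2it(\Re(kc)-\Re(k^3))}=1$, hence $\Re(kc-k^3)=0$.

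Next I would deduce $b\equiv 0$. Since $b$ is continuous, $\{k:b(k)\neq 0\}$ is open; but by the previous step it is contained in the zero set of the harmonic function $k\mapsto\Re(kc-k^3)$, i.e. the real part of the nonconstant holomorphic polynomial $f(k)=kc-k^3$, and such a zero set has empty interior (a holomorphic function with locally constant real part is locally constant, hence by the identity theorem constant, contradicting $\partial_k^3 f=-6\neq 0$). An open set with empty interior is empty, so $b\equiv 0$.

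Finally, with $b\equiv 0$ the $\bar\partial$-equation \eqref{d_bar} reduces to $\partial_{\bar k}\mu(z,k)=0$, so for each fixed $z$ the function $\mu(z,\cdot)$ is holomorphic on $\mathbb{C}\setminus 0$; it is bounded near $k=0$ by \eqref{mu_bounded} (removable singularity) and tends to $1$ at infinity by \eqref{mu_limit}, so Liouville gives $\mu(z,k)\equiv 1$ and $\psi(z,k)=e^{ikz}$. Since $\Delta e^{ikz}=4\partial_z\partial_{\bar z}e^{ikz}=0$, equation \eqref{schrodinger} forces $v(z)e^{ikz}\equiv 0$, i.e. $v\equiv 0$; then $\partial_{\bar z}w=-3\partial_z v=0$ by \eqref{NV}, so $w(\cdot,t)$ is entire and vanishes at infinity, whence $w\equiv 0$. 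I expect the only point needing care to be bookkeeping rather than analysis: verifying that the scattering data are genuinely well defined both at $t=0$ and at every later time, so that Lemma~1 and \eqref{time_evolution} may both legitimately be invoked --- which is exactly what the conductivity-type hypothesis buys --- while the analytic core (matching two exponential factors, one Liouville argument) is short.
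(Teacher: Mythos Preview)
Your proof is correct and follows essentially the same route as the paper's: compare the phase from Lemma~1 with the phase from \eqref{time_evolution}, use that the zero set of $\Re(kc-k^3)$ has empty interior (the paper phrases this as linear independence of $k,\bar k,k^3,\bar k^3,1$ near any point) to force $b\equiv 0$, then apply Liouville to $\mu$. You supply slightly more detail than the paper's sketch---the explicit harmonic-function argument for empty interior and the final Liouville step yielding $w\equiv 0$---but the strategy is identical.
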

\begin{proof}[Scheme of proof]
From (\ref{b_y}), (\ref{time_evolution}), continuity of $ b( k ) $ on $ \mathbb{C} \backslash 0 $ and the fact that the functions $ k $, $ \bar k $, $ k^3 $, $ \bar k^3 $, $ 1 $ are linearly independent in the neighborhood of any point, it follows that $ b \equiv 0 $. Equation (\ref{d_bar}) implies that in this case the function $ \mu( z, k ) $ is holomorphic on $ k $, $ k \in \mathbb{C} \backslash 0 $. Using properties (\ref{mu_limit}) and (\ref{mu_bounded}) we apply Liouville theorem to obtain that $ \mu \equiv 1 $. Then $ \psi( z, k ) = e^{ i k z } $ and from (\ref{schrodinger}) it follows that $ v \equiv 0 $.
\end{proof}

\end{document}